\def\ie{i.e.,\xspace}
\def\eg{e.g.,\xspace}
\def\wrt{w.r.t.\xspace}
\newcommand{\keywords}[1]{\par\addvspace\baselineskip
\noindent\keywordname\enspace\ignorespaces#1}
\newcommand{\oset}[3][0ex]{%
  \mathrel{\mathop{#3}\limits^{
    \vbox to#1{\kern-2\ex@
    \hbox{$\scriptstyle#2$}\vss}}}}
\def\ie{i.e.,\xspace}
\def\eg{e.g.,\xspace}
\def\wrt{w.r.t.\xspace}
\newcommand{\finexsymbol}{\ensuremath\Diamond}
\newcommand{\finex}{\hfill$\finexsymbol$}
\newcommand{\ASD}{\mathcal{A}}
\newcommand{\RSD}{\mathcal{R}}
\def\<#1>{\langle #1 \rangle}
\def\Music21{\textsf{Music21}\xspace}
\def\ps13{\textsf{\small ps13}\xspace}
\newcommand{\dom}{\ensuremath{\mathit{dom}}}
\newcommand{\Semiring}{\mathcal{S}}
\newcommand{\zero}{\mathbb{0}}
\newcommand{\one}{\mathbb{1}}
\def\Reduction#1#2#3#4{%
\mathrel{\raise1.0ex\hbox{%
\vtop{\ialign{##\crcr%
\raise0.0ex\hbox{$\hfil\scriptstyle{\ #1\ }\hfil$}\crcr%
\noalign{\nointerlineskip}%
\rightarrowfill\crcr%
\noalign{\nointerlineskip}%
\raise0.0ex\hbox{$\hfil\scriptstyle{\ #2\ }\hfil$}\crcr}}}{}^{#3}_{#4}}}
\def\Leduction#1#2#3#4{%
\mathrel{\raise1.0ex\hbox{%
\vtop{\ialign{##\crcr%
\raise0.0ex\hbox{$\hfil\scriptstyle{\ #1\ }\hfil$}\crcr%
\noalign{\nointerlineskip}%
\leftarrowfill\crcr%
\noalign{\nointerlineskip}%
$\hfil\scriptstyle{\ #2\ }\hfil$\crcr}}}{}^{#3}_{#4}}}
\def\hookReduction#1#2#3#4{%
\mathrel{\raise1.2ex\hbox{%
\vtop{\ialign{##\crcr%
\raise0.0ex\hbox{$\hfil\scriptstyle{\ #1\ }\hfil$}\crcr%
\noalign{\nointerlineskip}%
$\lhook\joinrel$\hspace{-0.35em}
\rightarrowfill\crcr%
\noalign{\nointerlineskip}%
$\hfil\scriptstyle{\ #2\ }\hfil$\crcr}}}{}^{#3}_{#4}}}
\def\hoookReduction#1#2#3#4{%
\lhook\joinrel\hspace{-0.50em}
\raise0.85ex\hbox{%
\vtop{\ialign{##\crcr%
\raise0.4ex\hbox{$\hfil\scriptstyle{\ #1\ }\hfil$}\crcr%
\noalign{\nointerlineskip}%
\rightarrowfill\crcr%
\noalign{\nointerlineskip}%
$\hfil\scriptstyle{\ #2\ }\hfil$\crcr}}}{}^{#3}_{#4}}
\def\frew#1#2#3#4#5#6#7#8{
\setbox0=\hbox{$#6 #7 #1 #8$}%
\setbox1=\hbox{$#6 #7 #2 #8$}%
\ifdim \wd0>\wd1 \rlap{\rlap{\hbox to \wd0{#5}}%
                            {\hbox to\wd0{\hfil\lower #3\box1\relax\hfil}}}{\raise #4\box0}%
\else \rlap{\rlap{\hbox to \wd1{#5}}{\hbox to\wd1{\hfil\raise #4\box0\relax\hfil}}}{\lower #3\box1}%
\fi
}
\begin{document}

\mainmatter  

\title{8+8=4: Formalizing Time Units \\ to Handle Symbolic Music Durations}

\titlerunning{8+8=4: Formalizing Music Durations}

%
%
\author{Emmanouil Karystinaios\inst{1}\and Francesco Foscarin\inst{1} \and Florent Jacquemard\inst{2} \and Masahiko Sakai\inst{3}  \and Satoshi Tojo\inst{4} \and  Gerhard Widmer\inst{1,5}}
%
\authorrunning{Karystinaios et al.}

\institute{Institute of Computational Perception, Johannes Kepler University Linz, Austria \and CNAM Paris, France \and Nagoya University, Japan \and Asia University, Japan \and LIT AI Lab, Linz Institute of Technology, Austria\\ \email{emmanouil.karystinaios@jku.at} \\ \email{francesco.foscarin@jku.at}}

%

\maketitle

\begin{abstract}
This paper focuses on the nominal durations
of musical events (notes and rests) in a symbolic musical score, and on how to conveniently handle these in computer applications. We propose the usage of a temporal unit that is directly related to the graphical symbols in musical scores, and pair this with a set of operations that cover typical computations in music applications. We formalise this time unit and the more commonly used approach in a single mathematical framework, as semirings, algebraic structures that enable an abstract description of algorithms / processing pipelines.
We then discuss some practical use cases and  highlight when our system can improve such pipelines by making them more efficient in terms of data type used and the number of computations.

\keywords{symbolic music; musical score; duration encoding.}
\end{abstract}

\section{Introduction}
In a musical score, the duration of musical events (\ie notes and rests) is defined by a finite set of symbols, and their temporal position by summing the duration of the previous musical events.
Computer applications that deal with musical scores typically work with \textit{Relative Symbolic Duration} (RSD) units, i.e., they choose a reference note duration and model all temporal information as ratios of that reference. For example, for the first four notes of the upper voice in Figure~\ref{fig:score_example}, one can choose a quarter note \musQuarter{} as a reference and represent the durations in the first two beats as the sequence $[0.5,0.25,0.25, 1]$.
This kind of encoding shows its limits for certain durations, typically those produced by irregular groupings (also called tuplets). The 5th note in the top voice in the figure would have a duration of $2/3$, which is a periodic number not representable as a floating point value in computer applications, thus requiring a truncation. This introduces an error that propagates to all subsequent musical events and creates a number of problems for applications that require exact matching of temporal positions.


%

Two main approaches have been proposed to solve this problem. The first is the \textit{fraction approach}, implemented, for example, by the Python library Music21~\cite{music21}. It involves the representation of durations with specific Python objects made to mimic a fraction. This eliminates the rounding problems, but the fraction object is inefficient to handle with respect to native Python types and is not supported by libraries for heavy computations such as Numba, Pytorch, or TensorFlow.
The second method, the \textit{common divisor} approach, consists of setting the aforementioned reference duration to a value that is a common divisor of all durations appearing in a given piece or set of pieces. All temporal information can then be expressed with natural numbers, enabling very efficient computations. This solution is adopted by the Python library Partitura~\cite{partitura_mec}, in some musical score storage formats such as MIDI, MEI, MusicXML, and in other computer music frameworks (e.g.,~\cite{foscarin2019parse}). However, this solution is still problematic for real-time scenarios when we do not know all duration in advance or when the piece can be modified. When a new duration is added that is not a multiple of the reference, the reference must be recomputed and all values updated.

\begin{figure}
    \centering
    \includegraphics[width=0.65\columnwidth]{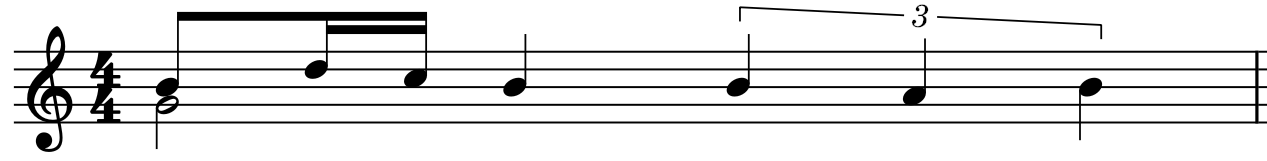}
    \caption{A musical score example with two problematic configurations: a tuplet and an incomplete bottom voice.
    }
    \label{fig:score_example}
\end{figure}

\begin{example}\label{ex:running}
Let us consider a toy application on the score of Figure~\ref{fig:score_example}. We are interested in importing it from an MEI file, splitting the third note (the C) in the top voice into two, and producing a pianoroll representation.
The notes in the top voice have durations, in RSD units (with a quarter note as reference),  of  $[\frac{1}{2},\frac{1}{4},\frac{1}{4},1,\frac{2}{3},\frac{2}{3}, \frac{2}{3}]$. In the common division approach, we first need to 
compute as reference value $\delta$ a common divisor of all absolute note durations, the largest (i.e., the greatest common divisor GCD)
if we want to optimize memory usage. In this case, this is $\frac{1}{12}$ of a quarter note. We then express each duration as a multiple of the reference, i.e., $[6,3,3,12,8,8,8]$. If we want to split the third note, we need to recalculate the value of $\delta$ as $1/24$, update the durations to $[12,6,6,24,16,16,16]$, and finally split the third note in two notes with duration $3$. 
We can produce the pianorolls of the two voices independently and then perform an element-wise sum to obtain the score pianoroll.
However, the second voice is logically incomplete in the score, missing an explicit half-note rest.\footnote{Ideally, both voices will have the same duration, but in real scores, this is often not the case; see \cite{asap-dataset,foscarin2021data} for a discussion about score quality.} Thus, we first need to compute the maximum between the total duration of the two voices and insert the missing rests in the second voice. We compute the onset of each note by summing the durations of all previous notes in the same voice. 
\finex
\end{example}

This paper discusses an alternative approach to handling durations: the use of \textit{Absolute Symbolic Duration} (ASD) units. 
The core idea is to consider the integers implied by the names of the graphical symbols. For example, a  quarter note \musQuarter{} as 4, an eight note \musEighth{} as 8, a 16th note \musSixteenth{} as 16, and so on. Durations produced from irregular groupings are also expressed as integers (see Table~\ref{tab:duration_table}). 
ASD units are already used by the Humdrum **kern file format, and (in a mixed representation with the divs approach) by MEI and MusicXML. However, they are only used to encode single note/rest durations. The typical pipeline procedure is to translate this duration format into relative symbolic durations, as a preprocessing step before any other operation.
\begin{table}[t]
\begin{center}
\begin{tabular}{lccccccc}
\hline
Music Symbol      & \musWhole{}  & \musHalf{}  & \musQuarter{} & $\; \oset[1.8ex]{3}{\text{\musQuarter{}}}\; $  & \musEighth{}  & $\overset{3}{\text{\musEighth{}}}$ & \musSixteenth{}  \\
Relative Symbolic Duration (1 = \musQuarter{} ) & 4  & 2 &  1 & $0.\bar{6}$ & 0.5 & $0.\bar{3}$ & 0.25 \\
Absolute Symbolic Duration     & 1  & 2 & 4 & 6 & 8 & 12 & 16 \\
\hline
\end{tabular}
\end{center}
\label{tab:duration_table}
\caption{Examples of music symbols and corresponding durations in RSD and ASD units. Notes with ``3'' on top are notes that are part of a triplet.}
\end{table}

On the contrary, we explore the usage of ASD, as ``standalone'' units to manipulate musical score durations. To make this practicable, we define two operations that cover typical use cases and we prove that, like RSD, ASD units form a \textit{semiring}, an algebraic structure that enables a more abstract general description of processing pipelines.
The actual computations can later be performed in ASD or RSD (or a mixture of the two), depending on the situation. This is enabled by an isomorphism that we provide to translate between the two units.
Finally, we discuss some practical cases where one unit is to be preferred over the other to make the pipeline more efficient in terms of the number of operations and data types that are considered. We implement some algorithms that use ASD units in the Python library Partitura~\cite{partitura_mec}.


\section{Definitions}
In this section, we first introduce the semiring; then we formally define \textit{ASD} and \textit{RSD} units and a morphism between them. Our goal with the introduction of this formalism is to give a general, abstract way of describing algorithms on music durations which is valid for both ASD and RSD units. Such algorithms can practically be performed in one unit or the other (or a mix of the two) depending on the specific application (see Section~\ref{sec:practical_usage}). 

\subsection{Semiring}
Formally, a \emph{semiring} 
$\Semiring = \< \mathbb{S}, \oplus, \otimes>$ 
is an algebraic structure that consists of a domain $\mathbb{S} = \dom(\Semiring)$, and two  associative binary operators~$\oplus$ and $\otimes$. Some properties must be verified: $\oplus$ is commutative, and $\otimes$ distributes over~$\oplus$, i.e.,  $\forall x, y, z \in \mathbb{S}$, $x \otimes (y \oplus z) = (x \otimes y) \oplus (x \otimes z)$.

Note that there is no complete agreement in the literature over the exact definition of a semiring. Other research (e.g., \cite{pin1998tropical}) defines the two operations of a semiring with a neutral element ($\zero$ and $\one$ respectively), such that $\zero$ is absorbing for~$\otimes$: $\forall x\in \mathbb{S}$, $\zero \otimes x = x \otimes \zero = \zero$. Then the semiring without neutral elements is called \textit{hemiring}.
However, similarly to~\cite{dudek2010characterizations,sen2021some}, we just use the term semiring, without including neutral elements (and in particular the absorbing propriety of $\zero$). The motivation is that verifying the absorbing propriety requires changes that would take our framework further away from its use for practically useful operations on music duration (more on this in Section~\ref{sec:rsd}). 
Components of any semiring~$\Semiring$ may be superscripted by~$\Semiring$ when needed. 
By abuse of notation, we write $x \in \Semiring$ to denote $x \in \mathbb{S}$.


A semiring $\Semiring$ is \emph{commutative} if $\otimes$ is commutative.
It  is \emph{idempotent} if
for all $x \in \Semiring$, $x \oplus x = x$.
It is \emph{monotonic} \wrt a partial ordering~$\leq$ 
iff for all $x, y, z$,  $x \leq y$ implies
$x \oplus z \leq y \oplus z$,
$x \otimes z \leq y \otimes z$
and $z \otimes x \leq z \otimes y$.
Every idempotent semiring $\Semiring$ induces 
a partial ordering~$\leq_\Semiring$ 
called the \emph{natural ordering} of~$\Semiring$
and defined by: 
for all $x$ and $y$,
$x \leq_\Semiring y \;\mbox{iff}\; x \oplus y = x$.
It holds then that $\Semiring$ is {monotonic} \wrt~$\leq_\Semiring$.
$\Semiring$ is called \emph{total} if
it is idempotent and~$\leq_\Semiring$ is total, 
\ie when for all $x$ and $y$, either $x \oplus y = x$ or $x \oplus y = y$.

Given the particular algebraic properties above, 
semirings can be used as a weight domain  
for optimization problems such as the search for shortest paths in 
weighted graphs or hypergraphs~\cite{Mohri02semiring,Huang08advanceddynamic}.
Indeed, the theory of semirings and in particular the min-plus and max-plus 
Tropical Algebras~\cite{gaubert1997methods} 
is commonly applied in decision theory and operational research, 
performance evaluation and control of dynamic systems,
and also formal language theory,
for quantitative extensions 
of formal computation models~\cite{DrosteKuich09semirings}
(weighted automata and grammars).
They have also been recently used 
for the formalization of musical elements, 
\eg 
harmonic/melodic intervals by Albini~\cite{10.1007/978-3-030-21392-3_6}, and to describe algorithms for musical tasks, \eg 
music transcription by $n$-best parsing~\cite{foscarin2019parse}, 
and melodic distance computation~\cite{Giraud22IC}.

This work focuses on formalisations of musical duration that form idempotent 
and commutative semirings.
Intuitively, in the applications presented in this paper, $\oplus$ selects the longest duration and $\otimes$ aggregates two durations in a single one.

\subsection{Absolute Symbolic Durations}
Let us define the semiring of Absolute Music Duration units $\ASD = \< \mathbb{Q}^+\cup\{ \infty \}, \oplus, \otimes>$ by detailing its domain and the two operations.

\subsubsection{The domain}\hfill\break
The domain $dom(\ASD) = \mathbb{Q}^+\cup\{ \infty \}$ of $\ASD$ contains (but is not limited to) non-null integers implied by the graphical symbol of notes and rests, e.g., quarter notes, eight notes, 16th notes, 32th notes, etc. Intuitively, larger values correspond to shorter notes. The limiting case is the null musical duration (used, for example, for grace notes), which is denoted by $\infty$. 
$dom(\ASD)$ also includes other values that can result from the use of \textit{duration modifiers} in the musical score, such as dots and tuplets, and will be described later in this section. 
We define $\prec^{\ASD}$ to be the strict order of absolute musical durations on the domain of $\ASD$. Elements of $\ASD$ are defined such that $\forall a,b \in dom(\ASD), a \prec^{\ASD} b \iff a > b$. 

\subsubsection{Operations}\hfill\break
We are interested in two operations: a \textit{selection} operation to find the longest duration, and a \textit{concatenation} to combine two or more musical durations. We define $\oplus^{\ASD}$ such that $a \oplus^{\ASD} b = \textrm{min}(a, b)$, as the selection operation. Practically, this operation can be used to select the longest voice within a measure, when their durations do not correspond, like in Example~\ref{fig:score_example}.

The concatenation operation $\otimes^{\ASD}$ is defined as $a \otimes^{\ASD} b = \frac{ab}{a+b}$. This operation expresses mathematically the well-known musical rules about aggregating durations. For example, the concatenation of two eighth notes yields a quarter note, which in our framework can be written as $8 \otimes^{\ASD} 8 = 4$. A more advanced usage for ties and dots is also exemplified in Section~\ref{sec:general}. Readers who are not familiar with the semiring formalisms may find confusing that this concatenation operation, which looks very much like a sum, is denoted with the symbol $\otimes$, but this is what is commonly used and we keep it for consistency.

To prove that $\ASD$ is a semiring we need to prove that we have closure for both operations and that the multiplication distributes over additions. We go slightly further than proving closure and prove that both operations are commutative monoids (i.e. that they are commutative, associative, and there is an identity element) since this could be useful for further extension of our framework. Remember that for simplicity we write $x \in \ASD$ to denote $x \in \dom(\ASD)$.


\begin{lemma}\label{com_plus}
    $\<dom(\ASD), \oplus^{\ASD}>$ is a commutative monoid.
\end{lemma}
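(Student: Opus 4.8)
The plan is to verify the three defining properties of a commutative monoid for $\<dom(\ASD), \oplus^{\ASD}>$, where $\oplus^{\ASD}$ is the binary $\min$ operation on $\mathbb{Q}^+ \cup \{\infty\}$. Concretely, I must establish: (i) \emph{closure} — that $a \oplus^{\ASD} b \in dom(\ASD)$ whenever $a, b \in dom(\ASD)$; (ii) \emph{commutativity}; (iii) \emph{associativity}; and (iv) existence of an \emph{identity element}. Since $\oplus^{\ASD}(a,b) = \min(a,b)$, each of these reduces to a standard fact about the minimum on a totally ordered set, so the bulk of the argument is a transfer of well-known order-theoretic properties into this notation.

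First I would fix the ordering. The domain $dom(\ASD) = \mathbb{Q}^+ \cup \{\infty\}$ carries the usual numerical order on $\mathbb{Q}^+$ extended by declaring $\infty$ to be the largest element, i.e. $q < \infty$ for every $q \in \mathbb{Q}^+$. This makes $\<dom(\ASD), \leq>$ a totally ordered set, and $\min$ is defined for every pair. Closure is then immediate: the minimum of two elements is one of those two elements, hence lies in $dom(\ASD)$; this covers the mixed case $a \oplus^{\ASD} \infty = a$ as well. Commutativity, $\min(a,b) = \min(b,a)$, follows directly from the symmetry of $\min$, and associativity, $\min(a, \min(b,c)) = \min(\min(a,b), c) = \min(a,b,c)$, follows because the minimum of three elements does not depend on the order in which pairwise minima are taken on a totally ordered set.

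For the identity element I would observe that the largest element $\infty$ acts as identity: for every $a \in dom(\ASD)$ we have $a \oplus^{\ASD} \infty = \min(a,\infty) = a$, using that $a \leq \infty$ by construction, and commutativity gives $\infty \oplus^{\ASD} a = a$ too. This is the natural choice because $\infty$ encodes the null musical duration, which intuitively ``adds nothing'' to the longest-duration selection. Collecting closure, associativity, commutativity, and the identity $\infty$ then yields that $\<dom(\ASD), \oplus^{\ASD}>$ is a commutative monoid.

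I do not anticipate a genuine obstacle here, as every step is a routine instance of min being the meet on a total order; the only point requiring a moment of care is the \textbf{handling of $\infty$}, since one must confirm the order extension is well-defined and that $\infty$ sits \emph{above} all of $\mathbb{Q}^+$ (so that it is the identity for $\min$, not an absorbing element). This is consistent with the paper's convention that larger numerical values denote shorter durations and $\infty$ denotes the null duration, so $\infty$ being the $\oplus$-identity matches the intended semantics of ``selecting the longest duration.''
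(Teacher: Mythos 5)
Your proof is correct and follows essentially the same route as the paper's: both reduce commutativity, associativity, and closure to standard properties of $\min$ on a totally ordered set and identify $\infty$ as the identity element. Your version is somewhat more careful than the paper's terse argument (in particular, you explicitly verify that the order extension placing $\infty$ above all of $\mathbb{Q}^+$ is what makes $\infty$ the identity rather than an absorbing element), but the underlying argument is identical.
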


\begin{proof}
    Let $a, b, c \in \ASD$. By definition $a \oplus^{\ASD} b = \textrm{min}(a, b)$.
    
    Then from the commutativity and associativity properties of the $\textrm{min}$ operation, $\oplus$ is also commutative and associative. The closure is trivial for $\textrm{min}$. The identity element is $\infty$, i.e., $\forall a \in \ASD, a \oplus \infty = a$.
\end{proof}

\begin{lemma}\label{com_times}
$\<dom(\ASD), \otimes^{\ASD}>$ is a commutative monoid.
\end{lemma}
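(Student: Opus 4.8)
The plan is to exploit the observation that the map $\iota\colon x \mapsto 1/x$ conjugates the concatenation operation into ordinary addition. For finite $a, b \in \mathbb{Q}^+$ one checks directly that $\frac{1}{a \otimes^{\ASD} b} = \frac{1}{a} + \frac{1}{b}$, so $\otimes^{\ASD}$ is nothing but addition read through reciprocals. Once this correspondence is in place, commutativity and associativity come essentially for free from the corresponding properties of $+$ on $\mathbb{Q}$: commutativity is immediate since $\frac{1}{a} + \frac{1}{b} = \frac{1}{b} + \frac{1}{a}$, while for associativity both $(a \otimes^{\ASD} b) \otimes^{\ASD} c$ and $a \otimes^{\ASD} (b \otimes^{\ASD} c)$ have reciprocal $\frac{1}{a} + \frac{1}{b} + \frac{1}{c}$ and are therefore equal, with no need to expand the messy quotients by hand.

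For closure on the finite part, the key point is that $dom(\ASD)$ restricted to $\mathbb{Q}^+$ consists of strictly positive rationals, so $a + b > 0$ and no division by zero can occur; the quotient $\frac{ab}{a+b}$ is then a positive rational and hence again lies in $dom(\ASD)$.

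The main work, and the only delicate point, is the treatment of the adjoined element $\infty$, which will play the role of the identity. I would extend $\otimes^{\ASD}$ to $\infty$ following the limits suggested by the formula: since $\lim_{b \to \infty} \frac{ab}{a+b} = a$, one sets $a \otimes^{\ASD} \infty = \infty \otimes^{\ASD} a = a$ for every finite $a$, and $\infty \otimes^{\ASD} \infty = \infty$. Under the reciprocal picture this is exactly the statement that $\iota(\infty) = 0$ acts as the additive identity, $\frac{1}{a} + 0 = \frac{1}{a}$. This simultaneously exhibits $\infty$ as the two-sided identity and completes the closure check, since every case involving $\infty$ again lands in $dom(\ASD)$.

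Finally I would confirm that commutativity and associativity survive the extension to $\infty$, which is routine once the identity law is established: any product containing $\infty$ collapses, by that law, to the product of the remaining finite factors, leaving only finitely many mixed cases to inspect. Altogether the extended operation corresponds, via $\iota$, to $+$ on $\mathbb{Q}^+ \cup \{0\}$, so the commutative monoid structure is inherited. The one obstacle worth flagging is to state the reciprocal correspondence as a genuine bijection between $dom(\ASD)$ and $\mathbb{Q}^+ \cup \{0\}$ (sending $\infty$ to $0$), so that the algebraic properties are transported rigorously rather than merely suggestively.
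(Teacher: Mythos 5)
Your proof is correct, and it takes a genuinely different route from the paper's. You transport the whole structure along the reciprocal bijection $\iota(x)=1/x$ from $dom(\ASD)=\mathbb{Q}^+\cup\{\infty\}$ to $\mathbb{Q}^+\cup\{0\}$ (with $\iota(\infty)=0$), under which $\otimes^{\ASD}$ becomes ordinary addition via $\tfrac{1}{a\otimes^{\ASD} b}=\tfrac{1}{a}+\tfrac{1}{b}$; the commutative monoid axioms are then inherited wholesale from $(\mathbb{Q}^+\cup\{0\},+,0)$. The paper instead verifies the axioms directly on the formula $\tfrac{ab}{a+b}$ --- asserting commutativity and associativity ``by the commutativity of addition and multiplication'' --- and treats $\infty$ separately by defining $a\otimes^{\ASD}\infty$ as the limit $\lim_{b\to\infty}\tfrac{ab}{a+b}=a$. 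Your approach buys two things the paper's argument lacks: associativity comes genuinely for free (the paper's one-line justification of associativity is its weakest step, since that property really requires checking that both parenthesizations equal $\tfrac{abc}{ab+bc+ca}$, which the paper never writes down), and the element $\infty$ is handled uniformly by the bijection rather than by an ad hoc limit computation (you also cover $\infty\otimes^{\ASD}\infty$, which the paper omits). It is also worth noting that your map $\iota$ is exactly the paper's own morphism $f(x)=\delta/x$ between $\ASD$ and $\RSD^\delta$ (Section on morphing between time units, with $\delta=1$), and the transported operation $+$ is exactly $\otimes^{\RSD}$; so your proof in effect derives the lemma from the isomorphism with the RSD side, which the paper only introduces afterwards. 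What the paper's direct verification buys in exchange is self-containedness: it never needs to set up the bijection or check that it intertwines the operations, including the $\infty$ cases --- a point you rightly flag as the part of your argument that must be stated as a genuine bijection for the transport to be rigorous.
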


\begin{proof}
Let $a, b, c \in \ASD$. By definition $a \otimes^{\ASD} b = \frac{ab}{a+b}$. 
By the commutativity of addition and multiplication, it follows that $\otimes^{\ASD}$ is also commutative and associative. Closure is also verified for the same reason.  
Let us investigate if the relationship also holds for the case of the null durations, i.e. $\infty$. We define $a \otimes^{\ASD} \infty$ as the limit $\lim_{b \to \infty} \left(a \otimes^{\ASD} b  \right)$.
\begin{multline*}
    a \otimes^{\ASD} \infty = \lim_{b \to \infty} \left(a \otimes^{\ASD} b  \right) = \lim_{b \to \infty} \left( \frac{ab}{a+b}\right) = a \lim_{b \to \infty}\left(\frac{b}{a+b}\right)= a \lim_{b \to \infty}\left(\frac{1}{\frac{a}{b}+1}\right) = \\ 
    = a \left( \frac{\lim_{b \to \infty}1}{\lim_{b \to \infty}\frac{a}{b}+1} \right) = a \left(\frac{1}{\lim_{b \to \infty} \frac{a}{b} + 1} \right) = a \left(\frac{1}{0+1}\right) = a     
\end{multline*}
\noindent
Since $\otimes^{\ASD}$ is commutative, this also holds for the case $\infty \otimes^{\ASD} a$. We also proved that $\infty$ is the neutral element of $\otimes^{\ASD}$.

\finex
\end{proof}

We will now prove some Lemmas that will be useful for the proof of Theorem~\ref{th:semiring}.
\begin{lemma}\label{inequality}
    Let $a,b,c \in \ASD$, then $b<c \iff a\otimes^\ASD b < a\otimes^\ASD c$.
\end{lemma}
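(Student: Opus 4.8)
The plan is to show that, for any fixed $a$, the map $x \mapsto a \otimes^{\ASD} x$ is strictly increasing on $\dom(\ASD)$; the biconditional then follows at once, since a strictly increasing function both preserves and reflects the strict order. The whole lemma thus reduces to a monotonicity statement about concatenation in its second argument (and, by the commutativity established in Lemma~\ref{com_times}, equivalently in its first).

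First I would treat the generic case where $a, b, c$ are all finite, i.e. lie in $\mathbb{Q}^+$. The useful algebraic manoeuvre is to rewrite the concatenation as
\[
a \otimes^{\ASD} x = \frac{ax}{a+x} = a - \frac{a^2}{a+x}.
\]
Since $a > 0$, the denominator $a+x$ is strictly increasing in $x$, so the subtracted term $\frac{a^2}{a+x}$ is strictly decreasing, and hence $a \otimes^{\ASD} x$ is strictly increasing in $x$. From this, $b < c$ immediately gives $a \otimes^{\ASD} b < a \otimes^{\ASD} c$; conversely, if $b \geq c$ then monotonicity yields $a \otimes^{\ASD} b \geq a \otimes^{\ASD} c$, contradicting the assumed strict inequality, which establishes the reverse implication.

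It then remains to handle the boundary cases involving $\infty$, which is the only place any care is needed. If $a = \infty$, then Lemma~\ref{com_times} gives $a \otimes^{\ASD} b = b$ and $a \otimes^{\ASD} c = c$, so the equivalence collapses to the tautology $b < c \iff b < c$. If instead $c = \infty$ with $b$ finite, the same lemma gives $a \otimes^{\ASD} c = a$, whereas the rewriting above shows $a \otimes^{\ASD} b = a - \frac{a^2}{a+b} < a$ for every finite $b$; since $b < \infty$ holds by definition of the ordering, both sides of the biconditional are true and therefore agree. I do not expect a genuine obstacle here: the heart of the argument is the one-line rewriting $\frac{ax}{a+x} = a - \frac{a^2}{a+x}$, and the sole point requiring vigilance is the consistent treatment of $\infty$, for which Lemma~\ref{com_times} supplies exactly the identities needed.
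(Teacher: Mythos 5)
Your proof is correct, and it takes a genuinely different route from the paper's. The paper establishes the biconditional in one sweep, as a chain of equivalences driven by cross-multiplication: $b<c \iff ab<ac \iff ab+bc<ac+bc \iff b(a+c)<c(a+b) \iff ab(a+c)<ac(a+b) \iff \frac{ab}{a+b}<\frac{ac}{a+c}$, each step justified by multiplying or dividing by a quantity assumed positive. You instead reduce the lemma to strict monotonicity of $x \mapsto a\otimes^\ASD x$ via the rewriting $\frac{ax}{a+x}=a-\frac{a^2}{a+x}$, obtain the forward implication immediately, and recover the converse by trichotomy. Your framing buys three things: the monotonicity statement is precisely what Lemma~\ref{distribute} uses downstream (to evaluate the $\min$ in equation~(\ref{eq:proof_left})); the identity makes the bound $a\otimes^\ASD x<a$ for finite arguments transparent; and you explicitly treat $\infty$, which lies in $\dom(\ASD)$ but is silently ignored by the paper's purely finite algebraic chain (all of its side conditions, $a>0$, $bc>0$, $a+b>0$, presuppose finite values). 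What the paper's version buys is brevity: both directions come at once, with no case analysis. One small omission on your side, easily repaired: you cover $a=\infty$ and the case $c=\infty$ with $b$ finite, but not the symmetric case $b=\infty$ with $c$ finite (both sides of the biconditional are then false, since $a\otimes^\ASD c<a=a\otimes^\ASD b$), nor $b=c=\infty$ (again both sides false); these should be stated for completeness, though they follow by exactly the reasoning you already give.
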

\begin{proof}
    \begin{multline*}
        b<c \overset{a>0}{\iff} ab < ac \overset{bc>0}{\iff} ab + bc < ac + bc \iff b(a+c) < c(a+b) \overset{a>0}{\iff} \\
        ab(a+c) < ac(a+b) \overset{a+c>0, \; a+b>0}{\iff} \frac{ab}{a+b} < \frac{ac}{a+c} \equiv a\otimes^\ASD b < a\otimes^\ASD c
    \end{multline*}
\finex
\end{proof}

\begin{lemma}\label{distribute}
    $\otimes^\ASD$ is left and right distributive over $\oplus^\ASD$.
\end{lemma}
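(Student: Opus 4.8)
The plan is to prove left distributivity first and then obtain right distributivity for free from the commutativity of $\otimes^\ASD$ already established in Lemma~\ref{com_times}. Unfolding the definitions $a \oplus^\ASD b = \min(a,b)$ and $a \otimes^\ASD b = \frac{ab}{a+b}$, left distributivity is exactly the identity $a \otimes^\ASD \min(b,c) = \min(a \otimes^\ASD b,\, a \otimes^\ASD c)$. So the whole statement reduces to showing that multiplying by a fixed $a$ commutes with the $\min$ operation, and right distributivity, $\min(b,c) \otimes^\ASD a = \min(b \otimes^\ASD a,\, c \otimes^\ASD a)$, then follows immediately by swapping factors with $\otimes^\ASD$.

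The key tool is Lemma~\ref{inequality}, which says precisely that the map $x \mapsto a \otimes^\ASD x$ is strictly increasing, since $b < c \iff a \otimes^\ASD b < a \otimes^\ASD c$. A strictly increasing function preserves $\min$, so I would argue by cases on the order of $b$ and $c$. If $b \leq c$, then $\min(b,c) = b$; Lemma~\ref{inequality} (together with the trivial case $b = c$) gives $a \otimes^\ASD b \leq a \otimes^\ASD c$, whence $\min(a \otimes^\ASD b,\, a \otimes^\ASD c) = a \otimes^\ASD b = a \otimes^\ASD \min(b,c)$. The case $c < b$ is symmetric. This settles the claim on all finite elements of $\dom(\ASD)$.

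It then remains to check the boundary value $\infty$, which lies in $\dom(\ASD)$ but is excluded from the purely algebraic inequality of Lemma~\ref{inequality}. Using $a \otimes^\ASD \infty = a$ from Lemma~\ref{com_times} and $\min(b, \infty) = b$ for finite $b$, one verifies directly that $a \otimes^\ASD b = \frac{ab}{a+b} < a = a \otimes^\ASD \infty$ whenever $a$ is finite, so the map $x \mapsto a \otimes^\ASD x$ remains strictly increasing at the endpoint and the $\min$ is still preserved; the fully degenerate cases where $a$, $b$, or $c$ equals $\infty$ reduce to identities already available from the neutrality of $\infty$. I expect no real obstacle here: the only substantive point is keeping the direction of monotonicity consistent with the fact that $\oplus^\ASD$ selects the numeric minimum, and once Lemma~\ref{inequality} is invoked the argument is essentially the standard observation that a strictly monotone map commutes with $\min$, the sole extra work being the short $\infty$ endpoint verification.
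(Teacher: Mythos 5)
Your proof is correct and follows essentially the same route as the paper's: a case analysis on the order of $b$ and $c$, with Lemma~\ref{inequality} supplying the strict monotonicity that makes $x \mapsto a \otimes^\ASD x$ commute with $\min$, and right distributivity then deduced from the commutativity of $\otimes^\ASD$. The only difference is that you also verify the $\infty$ endpoint explicitly, which the paper's proof silently omits (its Lemma~\ref{inequality} argument is purely algebraic and only covers finite values) --- a welcome extra bit of rigor, not a change of method.
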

\begin{proof}
    Let $a,b,c \in \ASD$:
    By induction on the order relation between $b,c$:
    \begin{itemize}    
    \item $b=c$: \\
        Trivial.
    \item $b<c$: \\
    \begin{equation}\label{eq:proof_right}
        a \otimes (b \oplus c) = \frac{a\ \textrm{min}(b, c)}{a + \textrm{min}(b, c)} \overset{\text{by IH}}{=} \frac{ab}{a + b} = a \otimes b
    \end{equation}
    \begin{equation}\label{eq:proof_left}
        (a \otimes b) \oplus (a \otimes c) = \textrm{min}((a \otimes b), \; (a \otimes c))  \overset{\text{by Lemma~\ref{inequality} and IH}}{=\joinrel=\joinrel=} a \otimes b
    \end{equation}
    Then our proof is completed by using reflexivity on \ref{eq:proof_right} and \ref{eq:proof_left}. \\
    \item $b>c$: \\
        Similar to $b<c$.
    \end{itemize}  
\noindent
The right side distributivity follows by the commutativity properties of the operations.
\finex
\end{proof}




\begin{theorem}\label{th:semiring}
    $(dom(\ASD), \oplus^{\ASD},\otimes^{\ASD})$ is a semiring.
\end{theorem}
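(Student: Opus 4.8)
The plan is to treat this theorem as an assembly of the preceding lemmas, since the paper's chosen definition of semiring is deliberately lightweight: it requires only that the domain be closed under the two operations, that both $\oplus^{\ASD}$ and $\otimes^{\ASD}$ be associative, that $\oplus^{\ASD}$ be commutative, and that $\otimes^{\ASD}$ distribute over $\oplus^{\ASD}$. No neutral or absorbing elements are demanded, so there is nothing extra to verify beyond these four axioms. My first step is therefore to restate this definition explicitly and observe that each required property has already been established, so the proof amounts to citing the right lemma for each clause.

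Concretely, I would argue as follows. Closure and associativity of $\oplus^{\ASD}$, together with its commutativity, are exactly the content of Lemma~\ref{com_plus}, which shows $\<dom(\ASD), \oplus^{\ASD}>$ is a commutative monoid. Closure and associativity of $\otimes^{\ASD}$ come from Lemma~\ref{com_times}, which shows $\<dom(\ASD), \otimes^{\ASD}>$ is a commutative monoid; I only need associativity here, but commutativity comes for free and will be reused below. Finally, distributivity of $\otimes^{\ASD}$ over $\oplus^{\ASD}$, on both sides, is precisely Lemma~\ref{distribute}. Collecting these facts, all four semiring axioms hold on $dom(\ASD) = \mathbb{Q}^+\cup\{\infty\}$, which proves the theorem.

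I do not expect a genuine obstacle at the level of the theorem itself: the real work has been front-loaded into the lemmas, and the theorem is essentially a corollary. If anything is worth flagging, it is where that work actually lives. The substantive content is in Lemma~\ref{distribute}, whose case analysis on the order of $b$ and $c$ relies crucially on the strict monotonicity of $\otimes^{\ASD}$ proved in Lemma~\ref{inequality}; and in the closure claim of Lemma~\ref{com_times}, which must treat the boundary value $\infty$ separately via the limit $\lim_{b \to \infty}\tfrac{ab}{a+b} = a$. Since those potentially delicate points are already discharged, the proof of Theorem~\ref{th:semiring} need only name the three lemmas and conclude. A one-paragraph proof that maps each axiom to its lemma is all that is required, and I would keep it at that level rather than re-deriving any computation.
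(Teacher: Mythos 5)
Your proposal is correct and matches the paper's own proof essentially verbatim: the paper likewise assembles Lemma~\ref{com_plus} (closure and associativity of $\oplus^{\ASD}$), Lemma~\ref{com_times} (closure and associativity of $\otimes^{\ASD}$), and Lemma~\ref{distribute} (distributivity) to conclude. Your additional remark that commutativity of $\oplus^{\ASD}$ is also needed under the paper's stated definition is a fair (and slightly more careful) observation, but the approach is the same.
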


\begin{proof}
    \noindent
We have all the elements to conclude the proof:
    \begin{itemize}
        \item $(dom(\ASD), \oplus^{\ASD})$ is associative and satisfies the closure property (by Lemma~\ref{com_plus});
        \item $(dom(\ASD), \otimes^{\ASD})$ is associative and satisfies the closure property (by Lemma~\ref{com_times});
        \item Multiplication distributes over addition (by Lemma~\ref{distribute})
    \end{itemize}    

\finex
\end{proof}

When dealing with multiple equal durations in music, it is practical to extend the $\oplus$ operation to define a scalar multiplication. For a duration $a 
\in \ASD$ and a scalar $n \in \mathbb{Q}$, it is denoted by the function $\textrm{repeat}^\ASD(a,n) = a/n$.


    


\subsection{Relative Symbolic Durations}

We define the semiring of Relative Symbolic Duration units $\RSD^\delta= (\mathbb{Q}^+\cup\{ 0 \}, \oplus, \otimes)$ relative to the reference duration $\delta$.

\subsubsection{The domain}
The domain $dom(\RSD^\delta) = \mathbb{Q}^+\cup\{ 0 \}$ of $\RSD$ contains durations measured relative to a reference duration value. Intuitively, smaller values correspond to shorter notes. The limiting case is the duration 0, which can be used, for example, for grace notes. 
We define $\prec^{\RSD}$ to be the strict order of absolute musical durations on the domain of $\ASD$. Elements of $\RSD$ are defined such that $\forall a,b \in dom(\RSD), a \prec^{\RSD} b \iff a < b$.

\subsubsection{Operations}\label{sec:rsd}
Similarly to $\ASD$, $\oplus^{\RSD} \equiv \mathrm{max}$ is used to select the larger duration and $\otimes^{\RSD} \equiv +$ is used to add two durations together. 
The repeat operation can be defined as $\textrm{repeat}^\RSD(a,n) = a*n$.

We skip the proof of $\RSD$ being a semiring for brevity. It can also be noted that the operations and domain we defined are equivalent to those of a tropical semiring~\cite{pin1998tropical}, so the proof for tropical semirings is also valid for our case.
Differently from a tropical semiring, however, we don't have the absorption property of the $\otimes$ neutral elements $\zero$, i.e., $\forall x\in \RSD$, $\zero \otimes x = x \otimes \zero = \zero$. In order to verify this, we would need to swap the min with the max (and vice-versa) for the $\otimes$ in our two semirings, but this would make for a non-musically useful operation, violating the ultimate objective of this research.



\subsection{A General Duration Framework}\label{sec:general}
Table~\ref{tab:def_table} summarises our formalization of ASD and RSD units. In the following, we introduce a morphing function to convert between these two units. Finally, we include in our framework the duration modifiers that are used in musical scores, i.e., ties, dots, and tuplets.

\subsubsection{Morphing between time units}\label{subsec:morphing}\hfill\break
Given a reference duration value $\delta$, we define the reciprocal function $f(x)=\delta/x$ that maps every element $x \in dom(\ASD)$ to its correspondent in $\RSD^\delta$, and vice-versa. It is trivial to see that this function is isomorphic and order-preserving (it preserves the ordering in the respective source/target domains, even though the order in $\ASD$ is reversed with respect to $\RSD$);
it follows that $f$ is a \textit{Homomorphism}, i.e. $\forall a, b \in dom(\RSD), \; f(a\otimes^{\RSD} b) = f(a) \otimes^{\ASD} f(b)$. 
The choice of $\delta$ has interesting practical implications. For example, by setting it to a beat duration (which depends on the time signature), we obtain units typically used in music research to reduce the dependency on the time signature. By setting it to a
quarter note duration we obtain the so-called \textit{quarter length} durations, commonly used for general applications since they do not depend on other score parameters.


\begin{table}[]
    \begin{center}
    \begin{tabular}{l|c c c}
        & $\mathbb{S}$ & $\oplus$ & $\otimes$  \\
        \hline
        ASD $\ASD$ & $\mathbb{Q}^+\cup \{ \infty \} $ &$\; \textrm{min}(a, b) \; $ & $\; \frac{ab}{a+b} \; $\\
        RSD $\RSD$ & $\mathbb{Q}^+\cup\{ 0 \} $ & $\; \textrm{max}(a, b) \;$ & $ \; a+b \; $\\
    \end{tabular}
    \end{center}
    \caption{Table comparing the semirings of Absolute and Relative Symbolic Durations.}
    \label{tab:def_table}
\end{table}


\subsubsection{Duration modifiers}\label{subsubsec:mus_concepts}\hfill\break
In a musical score, there are some graphical symbols, i.e., ties, dots, and tuplet groupings, that modify the duration of the notes/rests they are assigned to. In this section, we will define ties, dots, and tuplets as functions applied to elements of either $\ASD$ or $\RSD$. 
We use the symbol $\mathcal{X}$ to refer to either of the structures $\ASD$ or $\RSD$.


First, let us consider the ties between notes. The total duration of two tied notes $a, b \in \mathcal{X}$ can be easily captured by the $\otimes$ operation.

\begin{definition}
The total duration of two tied notes $a, b \in \mathcal{X}$ is given by function $\textrm{tie}: \mathcal{X} \times \mathcal{X} \to \mathcal{X}$
    \begin{equation}\label{eq:tie}
        \mathrm{tie}(a, b) = a \otimes^\mathcal{X} b
    \end{equation}
\end{definition}

Another musical concept that can prolong the duration of a musical note is the dot. A dotted note $a$ can be seen as a function $\textrm{dot}$ applied to the note $a$. This can be generalized for an arbitrary number of dots:
\begin{definition}
    The function $\textrm{dot}: \mathcal{X} \times \mathbb{N} \to \mathcal{X}$ applied to a note $a \in \mathcal{X}$ is inductively defined as follows:
\begin{align}
    \mathrm{dot}(a, 0) &= a \\
    \mathrm{dot}(a, n+1) &= \mathrm{repeat}(a, \frac{1}{2^{n+1}}) \otimes^\mathcal{X} \mathrm{dot}(a, n)
\end{align}   
\end{definition}

Another function that can be used to construct musical duration is the tuplet function. The duration of a note in a tuplet of total duration $a\in\mathcal{X}$ can be seen as a function with two parameters, the base note duration $a$ and the type of tuple $\gamma$ (in this case 3 for triplet).
\begin{definition}\label{def:tuplet}
    Let $a \in \mathcal{X}$, $\gamma \in \mathbb{N}_{>2}$. The tuplet function,$t: \mathcal{X} \times \mathbb{N}^* \to \mathcal{X}$, is defined as:
\begin{equation}
    t(a, \gamma) = 
    \mathrm{repeat}(a,  \frac{2}{\gamma })
\end{equation}
\end{definition}






\begin{example}
We use the formalisms introduced in this section on the problem of Example~\ref{ex:running}, where the goal was to import the score from an MEI file, split the third note (the C) in the top voice into two, and produce a pianoroll representation. This process can abstractly be described solved as:
(1) read all durations $[d_1,d_2,\dots,d_n]$ from the input MEI file; (2) compute the values of the notes under the triplet with Definition~\ref{def:tuplet}; (3) split the third note into two notes with duration $d_{\textrm{new}} = \textrm{repeat}(d_3,1/2)$; (4) find each note onset and offset position by $[d_1 \otimes d_2 \otimes \dots \otimes d_n]$;  (5) for the last note offset of each voice, compute the maximum with the $\oplus$ operator; (6) output the pianoroll representations for the two voices, using the start times and durations thus calculated.
\end{example}

\section{From Abstract Description to Algorithm Implementation}\label{sec:practical_usage}
In the previous section, we introduced an abstract formalism to describe algorithms on music sequences. We now discuss cases where it is more efficient to perform such algorithms in ASD units or in RSD units.

\subsection{Advantages and Disadvantages}
The use of ASD units can bring advantages in terms of data types because it can give a prevalence of integers over floating point (and periodic) values. For this to be the case, we need to deal with durations that span a maximum of a whole note. In a 4/4 piece, this will correspond to durations of one measure.
This does not mean that algorithms implemented in ASD units cannot handle multiple measures, but rather that they should follow a ``divide et impera'' principle where every measure is handled independently. This is already quite common in file-parsing systems since scores are encoded measure by measure in file formats such as MusicXML and MEI.

In terms of the number of computations, ASD units are ideal for applications that concern the graphical symbols used in the score, for example, changing the pitch of a note, changing a duration, or segmenting a musical score. Such applications can skip the costly computation of common divisor, and conversion to RSD units altogether. 
Instead, when the measure is not specified (which could be the case, for example, in handling a MIDI file), or when we want to do operations that don't follow the measure segmentation (e.g., segmenting a score between measures), the usage of RSD units is preferred.

\begin{example}
    Let us consider the problem of Example~\ref{ex:running}. By considering ASD units, we can parse the input score file simply by copying the values for the note graphical durations. The splitting of the third note of duration $16$ in two parts yields two notes of duration $16*2 = 32$.
\end{example}

A big limit in the efficiency of ASD units is posed by time signatures where the beat is a dotted note, for example, 6/8. A dotted note will make the duration assume noninteger, or even periodic, values. 
A possible solution to this problem is given in the next section.

\subsection{The lazy evaluation case}
It is common for systems that deal with musical scores to have a generic import function, where the score file is converted to some internal representation. If in In this step, the user did not yet specify the set of operations they intend to perform, the choice of whether to use ASD or RSD cannot be performed made.
In order to let the system choose between ASD and RSD to exploit the advantages described in the previous section, we suggest using a \textit{lazy evaluation} parsing strategy. First, we propose to reduce the domain by considering only the \textit{ASD} values $\{2^{n} \mid \forall n \in [0..7] \}$ (i.e. only single graphical note/rest symbols). Duration modifiers such as dots or tuplets are imported as functions $\textrm{dot}$ or $\textrm{tuple}$ without  being computed.
Only when the user specifies a task, 
will these functions be resolved to actual values, and the task is performed either in ASD or RSD units, depending on what would allow for the most efficient computation.
From a functional programming perspective, this can be viewed as a Monad transformation~\cite{wadler1992essence} of the parsed elements. 




\subsection{Implementation}
We provide a proof of concept of the practical utility of the methods introduced in this paper, by implementing some functions in the Python library \textsf{Partitura}~\cite{partitura_mec}. 
The core of this library, i.e., the \textit{Timeline} object, uses RSD units, in particular on the common divisor technique described in the introduction. However, some functions in the file parsing module are modular enough to make it possible to run them in ADS without the need of making major changes to the rest of the library. These are: (1) the functions to compute the common divisor for integer encoding of RSD durations, (2) the function to find the longest voice in a measure, and (3) the computation of the actual duration for a note inside a tuplet.
We also implement an alternative (still partial) parser of **Kern files that leverages a lazy evaluation approach.

\section{Conclusions and Discussion}\label{sec:discussion}
In this paper, we proposed an alternative approach to handling the symbolic music durations from musical scores, that is based on absolute symbolic duration (ASD) units. We formalized ASD, and the (typically used) relative symbolic duration (RSD) units, in a single mathematical framework, and paired them with two operations. The result is two semirings: algebraic structures that enable an abstract description of algorithms on symbolic durations.
We then moved to a more practical discussion and described some use cases where one unit is more efficient than the other, in terms of data types (integers vs floating point) and number of calculations. Finally, we advocated a functional parsing of symbolic music formats that can select the most efficient way of performing the various operations in an algorithm and enable considerable speed-up for common use cases.

It is clear that the proposals in this paper are mostly of theoretical interest, and belong to the research branch that formalizes musical elements with mathematical structures~\cite{10.1007/978-3-030-21392-3_6, mazzola2012topos, popoff2016k}. However, our interest in this topic started from our practical experience with parsing and processing musical score files to use their information as input for music information retrieval (MIR) systems. While the improvement in efficiency that our methods may enable is negligible for a single score, large deep-learning models have to load thousands of scores, thus making each small optimization much more useful. For example, we will probably soon see some general tokenization techniques for musical scores (similar to the multiple ones that have been proposed for MIDI files~\cite{miditok2021}); in this context, a tokenization that focuses on the graphical symbols using only ASD, could enable major speedups in computing time. 



\section{Acknowledgements}
This work was supported by the European Research Council (ERC) under the EU's Horizon 2020 research \& innovation programme, grant agreement No.\ 101019375 \\(\textit{Whither Music?}), the Federal State of Upper Austria (LIT AI Lab), and JSPS Kaken 20H04302, 21H03572.

\printbibliography
\end{document}